\documentclass[11 pt]{article}
\usepackage{indentfirst,mathrsfs}
\usepackage{amsfonts}
\usepackage{amsmath,amsthm,amssymb}
\usepackage[colorlinks,linkcolor=black,
            pdftitle={title},
              pdfauthor={author},
              pdfkeywords={}]{hyperref}

\setlength{\oddsidemargin}{0.1in} \setlength{\textwidth}{6.0in}
\setlength{\topmargin}{-0.25in} \setlength{\textheight}{8.7in}

\newtheorem{theorem}{Theorem}

\newtheorem{construction}{Construction}
\newtheorem{remark}{Remark}

\newtheorem{lemma}{Lemma}

\def\F{{\mathbb F}}
\newcommand{\0}{\mathbf{0}}

\begin{document}

\begin{center}
\Large{A general secondary construction of Boolean functions including the indirect sum and its generalizations}
\end{center}

\begin{center}
Claude Carlet\footnotemark[1] and Deng Tang\footnotemark[2]

\footnotetext[1]{Universities of Paris 8, France and Bergen, Norway. Email: claude.carlet@gmail.com}

\footnotetext[2]{Shanghai Jiao Tong University, China. Email: dtang@foxmail.com}
\end{center}
% School of Computer Science,

\begin{abstract}We {\color{black}study} a secondary construction of Boolean functions, which generalizes the direct sum and the indirect sum. %This construction had been briefly evoked at the end of a paper co-authored by the first author and S. Mesnager and published in 2010 by the Journal IJICoT and  entitled ``On the construction of bent vectorial functions".
We detail how these two classic secondary constructions are particular cases of this more general one, as well as two known generalizations of the indirect sum. This unifies the known secondary constructions of Boolean functions. We study very precisely the Walsh transform of the constructed functions. This leads us to an interesting observation on the Walsh transforms $W_g,W_{g'},W_{g''}$, and $W_{g\oplus g'\oplus g''}$ when $g,g',g''$ are Boolean functions such that $(g\oplus g')(g\oplus g'')$ equals the zero function.
% {\color{red}We also obtain a class of bent functions derived from the generalized indirect sum construction.}
%, which may lead to efficient instances of diverse secondary constructions such as the Construction without extension of the number of variables (see Carlet's book ``Boolean Functions for Cryptography and Coding Theory, Cambridge University Press, 2021'').} and we develop constructions obtained by specifying this general construction.

\end{abstract}
\section{Introduction}
Constructions of Boolean functions satisfying some desired features (such as being highly nonlinear, and/or plateaued, and/or resilient, and/or algebraic immune see e.g. \cite{Book}) are necessary for providing Boolean functions that are applicable to coding theory, combinatorics, or cryptography, in information protection and discrete mathematics. Primary constructions (which build functions from scratch, such as the well-known Maiorana-McFarland construction \cite{Dil74,CC-CCCS91}) are often not sufficient for obtaining enough functions having all the desired features. Secondary constructions (building new functions from so-called initial functions having the desired property) are then also needed. The two main known secondary constructions are the direct sum construction (introduced in \cite{Dil74}) $f(x,y)=g(x)\oplus h(y)$ (where $x\in \F_2^r$ and $y\in \F_2^{n-r}$ and $(x,y)\in \F_2^n$ is the concatenation of $x$ and $y$) and its generalization, the indirect sum construction (introduced in  \cite{carlet2004resbent}): $f(x,y)=g(x)\oplus h(y)\oplus (g(x)\oplus g'(x))(h(y)\oplus h'(y))$ (where $x\in \F_2^r$ and $y\in \F_2^{n-r}$ as well). In \cite{ZCHZ}, the indirect sum is used with  $f_1$ and $f_2$ (resp. $g_1$ and $g_2$) being the restrictions of a bent function $f$ (resp. $g$) to two hyperplanes, complementary of each other.\\
The indirect sum has been itself generalized in several ways. In \cite{CZH2012} are introduced the constructions:
    \begin{equation}\label{gen1}f(x,y)=g(x)\oplus  h(y)\oplus  (g\oplus  g')(x)(h\oplus  h')(y)
    \oplus  (g'\oplus  g'')(x)(h'\oplus  h'')(y),\end{equation}  $$f(x,y)=g(x)\oplus  h(y)\oplus  (g\oplus  g')(x)(h\oplus  h')(y)
    \oplus (g'\oplus  g'')(x)(h'\oplus  h'')(y)\oplus $$\begin{equation}\label{gen2}  (g''\oplus  g''')(x)(h''\oplus  h''')(y).\end{equation}
   {\color{black} In the present paper, we study a secondary construction, which generalizes all these ones (and gives then a simpler view on them).\\%, and happens to correspond to a construction that had been evoked fifteen years ago, in \cite{CM-bent-vect}, without any in-depth study being carried out}.\\

The study of the properties, needed for a Boolean function to be used for instance in a stream cipher, can be made, in almost all cases, through the study of its Walsh transform. The Walsh transform of an $n$-variable Boolean function $f:\F_2^n\mapsto \F_2$ is the integer-valued function $a\mapsto W_f(a)=\sum_{x\in \F_2^n}(-1)^{f(x)\oplus a\cdot x}$, where ``$\cdot$" is some inner product in $\F_2^n$, for instance the usual inner product $a\cdot x=\bigoplus a_ix_i$ (where $a=(a_1,\dots ,a_n)$ and $x=(x_1,\dots ,x_n)$).   The Walsh transforms of direct and indirect sums and of the generalization (\ref{gen1}) can be nicely expressed by means of the Walsh transforms of the initial functions $g,h,\dots$ (and this is why they easily provide bent functions from bent functions, plateaued functions from plateaued functions and resilient functions from resilient functions {\color{black}- see the definitions of these notions  in Section \ref{preliminarie}, see also the survey} \cite{Book}): we have, in the case of direct sum $W_f(a',a^{''})=W_g(a')W_h(a^{''})$, and in the case of indirect sum $W_{f}(a',a^{''})=
\frac{1}{2}W_{g}(a') {\Big (}W_{h}(a^{''})+W_{h'}(a^{''}){\Big )}+
\frac{1}{2}W_{g'}(a') {\Big (}W_{h}(a^{''})-W_{h'}(a^{''}){\Big )}$. The Walsh transform was not calculated in \cite{CZH2012} for Constructions (\ref{gen1}) and (\ref{gen2}), since the aim of this paper was only to build bent functions and characterizing the bentness of Functions (\ref{gen1}) and (\ref{gen2}) was possible by using a theorem from \cite{CC-C-constr}.  But the Walsh transform of (\ref{gen1}) was calculated in \cite{ZCHC} (in a rather complex way); the derived formula is:
$$W_f(a',a^{''})= \frac{1}{4}W_{h}(a^{''})\left[ W_{g}(a')+W_{g'}(a')\right.
  \left.+W_{g^{''}}(a')+W_{g\oplus g'\oplus g^{''}}(a')\right]$$$$+\frac{1}{4}W_{h'}(a^{''})\left[ W_{g}(a')-W_{g'}(a')\right.\left.-W_{g^{''}}(a')+W_{g\oplus g'\oplus g^{''}}(a')\right]$$$$+\frac{1}{4}W_{h^{''}}(a^{''})\left[ W_{g}(a')-W_{g'}(a')\right.\left.+W_{g^{''}}(a')-W_{g\oplus g'\oplus g^{''}}(a')\right]$$\begin{equation}\label{walshant}+\frac{1}{4}W_{h\oplus h'\oplus h^{''}}(a^{''})\left[ W_{g}(a')\!+\!W_{g'}(a')\!-\!W_{g^{''}}(a')\!-\!W_{g\oplus g'\oplus g^{''}}(a')\right].\end{equation}%Things are more complex in the case of the generalizations of the indirect sum recalled above.
 {\color{black}We study the Walsh transform of the functions generated by the generalized secondary construction. We check that the Walsh transforms of the direct and indirect sums, which correspond to the simplest particular cases of our construction, coincide with what gives our general formula. We study other (more complex) particular cases and we deduce new bent functions.}

\section{Preliminaries}\label{preliminarie}

For any positive integer $n$, we denote by $\F_2^n$ the vector space of $n$-tuples over the
finite field $\F_2=\{0,1\}$, by $\mathbf{0}_n=(0,\ldots,0)$ the all-zero vector in $\F_2^n$, by
$\F_2^{n*}$ the set $\F_2^n\setminus\{\0_n\}$, by $[n]$ the set $\{1,2,\ldots,n\}$ and by $\mathcal B_n$ the $\F_2$-vector space of $n$-variable Boolean functions $f:\F_2^n\mapsto \F_2$.
Given two vectors $x=(x_1, x_2, \ldots, x_n),y=(y_1, y_2, \ldots, y_n)$ in $\F_2^n$, we write  $x\preceq y$ and we say that $y$ {\em covers} $x$, if
$x_i\le y_i$  for all $i\in [n]$, that is, if ${\rm supp}(x)\subseteq {\rm supp}(y)$, where ${\rm supp}(x)$ is the support of $x$: ${\rm supp}(x)=\{i \in [n] : x_i\not=0 \}$.
%In addition, for any two vectors $u,v$ in $\F_2^n$, we define $\mathbb{C}_{u}=\{x\in\F_2^n : x\preceq u\}$ and $\mathbb{C}_{v,u}=\{x\in\F_2^n : v\preceq x\preceq u\}$. Obviously, $\mathbb{C}_{u}=\mathbb{C}_{\0_n, u}$ is a vector subspace of $\F_2^n$ since it is non-empty and preserved by the addition of two of its elements, and  if $v\preceq u$, then $\mathbb{C}_{v,u}$ is an affine space, since it is non-empty and preserved by the addition of three of its elements.You do not use these, please either erase or include somewhere a sentence involving them.
For a vector $a=(a_1, a_2, \ldots, a_n)\in \F_2^n$, its Hamming weight $w_H(a)$
is defined as the cardinality of its support: $|{\rm supp}(a)|$.
In the rest of this paper, we use $+$ (resp. $\sum$) to denote an addition (resp. a multiple sum) considered in characteristic 0 (in $\mathbb{Z}$) and
$\oplus$ (resp. $\bigoplus$) to denote an addition (resp. a multiple sum) computed in characteristic 2 (in $\F_2$ or $\F_2^n$ or $\mathcal B_n$).\\
Given two positive integers $s,k$, we call $(s,k)$-vectorial Boolean function any function $F:\F_2^s\mapsto \F_2^k$. If $k=1$, we simply call $F$ an $s$-variable Boolean function and we define its support (resp. its co-support) as the set $\{x\in \F_2^s :\; F(x)=1\}$ (resp. $\{x\in \F_2^s :\; F(x)=0\}$).
An $n$-variable Boolean function $f$ is called bent (see \cite{Book,CMdec,Mesnager-Book,CC-Rothaus}) if its nonlinearity $2^{n-1}-\frac 12\max_{a\in \F_2^n}|W_f(a)|$ achieves the optimum $2^{n-1}-2^{\frac n2-1}$. It is called plateaued if its Walsh transform $W_f(a)=\sum_{x\in \F_2^n}(-1)^{f(x)\oplus a\cdot x}$ takes all its value in $\{0,\pm \lambda\}$ for some positive integer $\lambda$ called the amplitude (and necessarily equal to a power of 2 whose exponent is between $\frac n2$ and $n$). It is called $t$-resilient for some $t\in \{0,\dots ,n-1\}$ if its Walsh transform takes value 0 for every $a\in \F_2^n$ of Hamming weight between 0 and $t$.

\section{A generalized indirect sum construction}\label{sec: GIndirectSum}
\begin{construction}\label{D:GIndirectSum}
Let $k,s,t$ be three positive integers and  $F$ be an $(s,k)$-vectorial Boolean function.
Let ${g}$ be a Boolean function in $s$ variables and for every $u\in\F_2^k$, let $h_u$ be a Boolean functions in $t$ variables.
We construct a Boolean function $f(x,y)$ in $n=s+t$ variables as follows
\begin{eqnarray*}
f(x,y)={g}(x)\oplus h_{F(x)}(y),
\end{eqnarray*}
where $x\in\F_2^s, y\in\F_2^t$.
\end{construction}

{\color{black}\begin{remark}\label{r0}
If we restrict ourselves to the case where $F$ is a permutation, we can change the definition of $f(x,y)$ into a simpler one: $f(x,y)=h'_{F(x)}(y)$ (that is, we can without loss of generality avoid the addition of $g$). Indeed, we can take $h'_z(y)=g(F^{-1}(z))+h_z(y)$. But in our definition, $F$ is not necessarily a permutation (and its domain and co-domain are not necessarily equal), which gives more freedom and in particular allows to consider cases where the image set of $F$ is small.
\end{remark}}

%\begin{remark}\label{R:GIndirectSum}
%It follows from the definition of $f(x,y)$ defined in Construction~\ref{D:GIndirectSum}, we can verify that
%\begin{eqnarray*}
%f(x,y)&=&{g}(x)\oplus h_{F(x)}(y)\\
%&=&\bigoplus_{u\in\F_2^k} \big({g}(x)\oplus h_{u}(y)\big)\prod_{i=1}^{k}\big(f_i(x)\oplus u_i\oplus 1\big),
%\end{eqnarray*}
%where $u=(u_1,u_2,\cdots, u_k)\in\F_2^k$ and $F(x)=(f_1(x),f_2(x),\cdots,f_k(x))$.
%\end{remark}

\begin{remark}\label{r1}
If we compose $F$ on the left by a function whose restriction to the image set of $F$ is injective (and if we change the $h_u$'s accordingly), we do not change function $f$. If the image set of $F$ has a size smaller than or equal to some power of 2, say $2^l$, then we can without loss of generality assume that $F$ is an $(s,l)$-function.
\end{remark}

\begin{remark}\label{R:GIndirectSum}
Let $F$  be a constant function in Construction~\ref{D:GIndirectSum}, that is $|Im(F)|=|\{F(x) : x \in \F_2^s\}|=1$,
then Construction~\ref{D:GIndirectSum} becomes the direct sum construction.
\\Let $F$  be a two-valued function, that is, $Im(F)=\{{u_0},{u_1}\}\subseteq \F_2^k$.
{\color{black}Then, denoting by $1_S(x)$ the indicator (or characteristic) function of a set $S$, i.e., $1_S(x)=1$ if $x\in S$ and $1_S(x)=0$ otherwise, and denoting ${g}'=g\oplus 1_{F^{-1}({u_1})}$,  Construction~\ref{D:GIndirectSum} becomes %$f(x,y)=1_{F^{-1}({u_0})}(x)\big(g(x)\oplus h_{u_0}(y)\big)\oplus 1_{F^{-1}({u_1})}(x)\big(g(x)\oplus h_{u_1}(y)\big)$, we have ${\rm supp}(g\oplus g')=\{x\in\F_2^s : ({g}\oplus {g}')(x)=1\}=F^{-1}({u_1})$ and  ${\rm co-supp}(g\oplus g')=\{x\in\F_2^s : ({g}\oplus {g}')(x)=0\}=F^{-1}(u_0)$. Therefore, we have  $f(x,y)={g}(x)\oplus \big(1\oplus ({g}\oplus {g}')(x)\big)h_{u_0}(y)\oplus ({g}\oplus {g}')(x)h_{u_1}(y)=
 $f(x,y)={g}(x)\oplus h_{u_0}(y)\oplus ({g}\oplus {g}')(x)(h_{u_0}\oplus h_{u_1})(y)$,}
which is the indirect sum construction.\\
{\color{black}If $Im(F)=\{u_0,u_1,u_2\}$ with $u_0, u_1,u_2$ distinct,  then Construction~\ref{D:GIndirectSum}
gives: $f(x,y)=1_{F^{-1}(u_0)}(x)\big({g}(x)\oplus h_{u_0}(y)\big)\oplus 1_{F^{-1}(u_1)}(x)\big({g}(x)\oplus h_{u_1}(y)\big)\oplus 1_{F^{-1}(u_2)}(x)\big({g}(x)\oplus h_{u_2}(y)\big)$. Still denoting by $g\oplus {g}'\in\mathcal{B}_s$ the Boolean function $1_{F^{-1}(u_1)}$ whose support equals $F^{-1}(u_1)$ and by $g\oplus {g}^{''}$ the Boolean function $1_{F^{-1}(u_2)}$ whose support equals $F^{-1}(u_2)$, we have $1_{F^{-1}(u_0)}=1\oplus g'\oplus g^{''}$ and  we have then $f(x,y)=(1\oplus g'(x)\oplus g^{''}(x))\big({g}(x)\oplus h_{u_0}(y)\big)\oplus (g(x)\oplus g'(x))\big({g}(x)\oplus h_{u_1}(y)\big)\oplus (g(x)\oplus g^{''}(x))\big({g}(x)\oplus h_{u_2}(y)\big)$, that is:
$$f(x,y)=$$\begin{equation}\label{size3}{g}(x)\oplus h_{u_0}(y)\oplus (g(x)\oplus g'(x))(h_{u_1}(y)\oplus h_{u_2}(y))\oplus(g'(x)\oplus g^{''}(x))(h_{u_2}(y)\oplus h_{u_0}(y)).\end{equation} It seems impossible to obtain (\ref{gen1}) by calling $h,h',h''$ the functions $h_{u_0},h_{u_1},h_{u_2}$ in some order and the constructions seem then apparently different. Note that (\ref{gen1}) is invariant when we exchange $g$ and $h$, $g'$ and $h'$, $g''$ and $h''$, while (\ref{size3}) is not when we exchange $g$ and $h_{u_1}$, $g'$ and $h_{u_2}$, $g''$ and $h_{u_0}$.\\}
In the next theorem, we extend the study of the Walsh transforms of the direct and indirect sums to the general case of Construction~\ref{D:GIndirectSum}.
\end{remark}

%  \begin{lemma}[\cite{Seberry1994}]\label{L:Walshsum}
%    Let $l$, $e$ and $m$ be three positive integers such that $m=l+e$.
%    Let $f(x_1,\cdots ,x_m)=g(x_1,\cdots, x_l)\oplus h(x_{l+1}, \cdots,x_{m})$ be the  direct sum construction of $g$ and $h$, where $g\in\mathcal{B}_l$ and $h\in\mathcal{B}_e$.
%    For any $\beta\in\F_2^m$, we have $W_f(\beta)=W_g(\beta')\cdot W_h(\beta'')$,
%     where $\beta=(\beta',\beta'')\in\F_2^l\times \F_2^{e}$ with $\beta'=(\beta_1,\cdots,\beta_l)$ and $\beta''=(\beta_{l+1},\cdots, \beta_{m})$.
%    \end{lemma}

{\color{black}\begin{remark}For every Boolean function $f$ over $\F_2^s\times \F_2^t$ and every Boolean function $g$ over $\F_2^s$, there exist Boolean functions $h_u$ over $\F_2^t$ and an $(s,s)$-function $F$ such that $f$ is obtained by Construction \ref{D:GIndirectSum}. Indeed, we can take, for instance, for every $x\in \F_2^s$, $F(x)=x$ and $h_x(y)=g(x)+f(x,y)$. This of course generalizes to any permutation $F$ of $\F_2^s$. We are then here generalizing a construction that has been considered in \cite{CC-C-constr} and reported in \cite[Theorem 15]{Book}, in the particular framework of bent functions. {\color{black}What is interesting with Construction \ref{D:GIndirectSum} is that we can consider cases where the image set of $F$ is small.} \end{remark}}
\subsection{The Walsh transform of the constructed functions}
In this subsection, we give two expressions of the Walsh transform of $f$, which will both be useful in the sequel.
\begin{theorem}\label{T:Walshf}
Let $k,s,t$ be three positive integers and  $F=(f_1,f_2,\cdots, f_k)$ be a $(s,k)$-vectorial Boolean function.
Let ${g}$ be a Boolean function in $s$ variables and for every $u\in\F_2^k$, let $h_u$ be a Boolean functions in $t$ variables,
and let $f$ be the $n=s+t$-variable Boolean function defined in Construction~\ref{D:GIndirectSum}. {\color{black}Then we have:
\begin{eqnarray}\label{E:fConsk=2}
f(x,y)&=&{g}(x)\oplus \bigoplus_{u\in \F_2^k}\bigoplus_{I\subseteq \{1,\dots ,k\}}\prod_{i\in I}f_i(x)\prod_{i\in \{1,\dots ,k\}\setminus I} (u_i\oplus 1)h_u(y).
\end{eqnarray}}
For any $a=(a', a^{''})\in\F_2^s\times \F_2^t$ with $a'=(a_1,\cdots,a_s)$ and $a^{''}=(a^{''}_1,\cdots, a^{''}_t)$ we have
{\color{black}\begin{eqnarray}\label{E:WalshPreimage}
 W_f(a)&=&\sum_{u\in \F_2^k}\sum_{(x,y)\in\F_2^s\times\F_2^t}1_{F^{-1}(u)}(x)(-1)^{g(x)\oplus h_u(y)\oplus a'\cdot x \oplus a^{''}\cdot y}\\\nonumber &=&\sum_{u\in \F_2^k}\sum_{x\in\F_2^s}1_{F^{-1}(u)}(x)(-1)^{g(x)\oplus a'\cdot x}W_{h_u}(a'')
\end{eqnarray}}
and
 \begin{eqnarray}\label{E:Walshfi}
 W_f(a)=\frac{1}{2^k}\sum_{u\in  \F_2^k}\sum_{v\in\F_2^k}(-1)^{v\cdot u} W_{{g}\oplus v\cdot F}(a')W_{h_u}(a^{''}),
\end{eqnarray}
% \begin{eqnarray*}
% W_f(a)=\frac{1}{2^k}\sum_{u\in  \F_2^k}\Bigg[W_{{g}}(a')W_{h_u}(a^{''})+\sum_{\emptyset\neq I\subseteq [k]}(-1)^{u_I}W_{ {g}\oplus f_I }(a')W_{h_u}(a^{''})\Bigg],
%\end{eqnarray*}
where $u=(u_1,u_2,\cdots,u_k), v=(v_1,v_2,\cdots,v_k)\in\F_2^k$ and $v\cdot F=\bigoplus_{i=1}^k v_if_i$.\\
{\color{black}Defining $h(y,u)=h_u(y)$, we have:
 \begin{eqnarray}\label{E:fConsk=2Walsh}
 W_f(a)&=&\frac{1}{2^k}\sum_{v\in\F_2^k}W_{{g}\oplus v\cdot F}(a^\prime)W_h(a^{\prime\prime},v).
\end{eqnarray}}
\end{theorem}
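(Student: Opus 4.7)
The plan is to establish the four claimed identities in sequence, relying only on distributivity in $\F_2$ and the Fourier-analytic orthogonality relation $\sum_{v\in \F_2^k}(-1)^{v\cdot w}=2^k\,1_{\{w=\0\}}$.

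For the expansion (\ref{E:fConsk=2}), I would first rewrite $h_{F(x)}(y)=\bigoplus_{u\in\F_2^k}1_{F^{-1}(u)}(x)\,h_u(y)$, and then use
\[1_{F^{-1}(u)}(x)=\prod_{i=1}^k\bigl(1\oplus f_i(x)\oplus u_i\bigr)=\prod_{i=1}^k\bigl((u_i\oplus 1)\oplus f_i(x)\bigr).\]
Distributing this product over $\oplus$, each of the $2^k$ resulting terms is indexed by the subset $I\subseteq\{1,\dots,k\}$ of factors from which one selects $f_i(x)$ (the complementary factors contribute the constant $u_i\oplus 1$). This is exactly the double sum in (\ref{E:fConsk=2}); adding $g(x)$ concludes the identity.

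For (\ref{E:WalshPreimage}), the idea is routine: in the defining sum $W_f(a)=\sum_{(x,y)}(-1)^{f(x,y)\oplus a'\cdot x\oplus a''\cdot y}$, I would partition the $x$-sum according to the value $u=F(x)$, which replaces $h_{F(x)}(y)$ by $h_u(y)$ on each slice and inserts the indicator $1_{F^{-1}(u)}(x)$; factoring the $y$-sum then yields $W_{h_u}(a'')$. To pass from (\ref{E:WalshPreimage}) to (\ref{E:Walshfi}), I would substitute the orthogonality expansion $1_{F^{-1}(u)}(x)=\frac{1}{2^k}\sum_{v\in\F_2^k}(-1)^{v\cdot(F(x)\oplus u)}$: the factor $(-1)^{v\cdot F(x)}$ combines with $(-1)^{g(x)}$ to form $(-1)^{(g\oplus v\cdot F)(x)}$, so that the $x$-sum becomes $W_{g\oplus v\cdot F}(a')$, and $(-1)^{v\cdot u}$ is carried through. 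Finally, for (\ref{E:fConsk=2Walsh}) with $h(y,u):=h_u(y)$, a direct computation from the definition of the Walsh transform on $\F_2^t\times\F_2^k$ gives $W_h(a'',v)=\sum_{u\in\F_2^k}(-1)^{v\cdot u}W_{h_u}(a'')$, so interchanging the summations in (\ref{E:Walshfi}) collapses the inner sum into $W_h(a'',v)$.

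There is no real obstacle: all four identities reduce to the definition of $f$, distributivity of $\cdot$ over $\oplus$, and orthogonality. The only step that deserves careful bookkeeping is (\ref{E:fConsk=2}), where one must verify that the constant $\prod_{i\notin I}(u_i\oplus 1)$ is exactly what arises from those factors in which $f_i(x)$ is not selected during the distributive expansion, and that the bijection between terms of the expansion and subsets $I\subseteq\{1,\dots,k\}$ is unambiguous.
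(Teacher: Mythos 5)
Your proposal is correct and follows essentially the same route as the paper: expanding the indicator $1_{F^{-1}(u)}(x)$ as a product of affine factors for (\ref{E:fConsk=2}), partitioning by pre-image sets for (\ref{E:WalshPreimage}), substituting the orthogonality expansion $1_{F^{-1}(u)}(x)=\frac{1}{2^k}\sum_{v}(-1)^{v\cdot(F(x)\oplus u)}$ for (\ref{E:Walshfi}), and identifying $W_h(a'',v)=\sum_u(-1)^{v\cdot u}W_{h_u}(a'')$ for (\ref{E:fConsk=2Walsh}). No gaps.
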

\begin{proof}{\color{black}
Relation (\ref{E:fConsk=2}) is directly deduced from the definition of Construction \ref{D:GIndirectSum}, translated into the expression $f(x,y)={g}(x)\oplus \bigoplus_{u\in \F_2^k}\prod_{i=1}^k(f_i(x)\oplus u_i\oplus 1)h_u(y)$. \\Relation (\ref{E:WalshPreimage}) directly comes from the definition of the Walsh transform through a partition by the pre-image sets. Let us prove Relation (\ref{E:Walshfi}).
Using that, for any $c\in \F_2$, and for $1\leq  i \leq k$ and  $x\in\F_2^s$,  we have
\begin{eqnarray*}
\frac{1+(-1)^{f_i(x)\oplus c}}{2}=\left\{
             \begin{array}{ll}
               1, & \hbox{if~$f_i(x)\oplus c=0$} \\
               0 , & \hbox{if~$f_i(x)\oplus c=1$}
             \end{array}
           \right.,
\end{eqnarray*}
%Thus,  for any $a\in\F_2^s$ we have
%\begin{eqnarray*}
%\sum_{\{x\in\F_2^s \,:\,  f_i(x)=c\}}(-1)^{f_i(x)\oplus a\cdot x}=\sum_{x\in\F_2^s}(-1)^{f_i(x)\oplus a\cdot x}\left(\frac{1+(-1)^{f_i(x)\oplus c}}{2}\right).
%\end{eqnarray*}
we can easily verify that} for any $u=(u_1,u_2,\cdots,u_k)\in\F_2^k$:
\begin{eqnarray*}
1_{F^{-1}(u)}(x)=\prod_{i=1}^{k}\left(\frac{1+(-1)^{f_i(x)\oplus u_i}}{2}\right).
\end{eqnarray*}
%Therefore, for any $a=(a', a^{''})\in\F_2^s\times \F_2^t$ with $a'=(a_1,\cdots,a_s)$ and $a^{''}=(a^{''}_1,\cdots, a^{''}_t)$,
%we have
%\begin{eqnarray*}
%\sum_{(x,y)\in F^{-1}(u)\times \F_2^t}(-1)^{g(x)\oplus h_u(y)\oplus a'\cdot x \oplus a^{''}\cdot y}
%%&=&\sum_{(x,y)\in F^{-1}(u)\times \F_2^t}1_{F^{-1}(u)}(x)(-1)^{g(x)\oplus h_u(y)\oplus a\cdot x}\\
%&=&\sum_{(x,y)\in\F_2^s\times\F_2^t}(-1)^{g(x)\oplus h_u(y)\oplus a'\cdot x \oplus a^{''}\cdot y}\prod_{i=1}^{k}\left(\frac{1+(-1)^{f_i(x)\oplus u_i}}{2}\right)\\
%&=&
%\end{eqnarray*}
Therefore,  we have
\begin{eqnarray*}
W_f(a)&=&\sum_{(x,y)\in \F_2^s\times \F_2^t}(-1)^{f(x,y)\oplus a'\cdot x \oplus a^{''}\cdot y}\\
&=&\sum_{u\in \{F(z) \,:\, z \in \F_2^s\}}\sum_{(x,y)\in F^{-1}(u)\times \F_2^t}(-1)^{{g}(x)\oplus h_u(y)\oplus a'\cdot x \oplus a^{''}\cdot y}\\
%&=&\sum_{(x,y)\in F^{-1}(u)\times \F_2^t}1_{F^{-1}(u)}(x)(-1)^{g(x)\oplus h_u(y)\oplus a\cdot x}\\
&=&\sum_{u\in \{F(z) \,:\, z \in \F_2^s\}}\sum_{(x,y)\in\F_2^s\times\F_2^t}1_{F^{-1}(u)}(x)(-1)^{{g}(x)\oplus h_u(y)\oplus a'\cdot x \oplus a^{''}\cdot y}\\
&=&\sum_{u \in \F_2^k}\sum_{(x,y)\in\F_2^s\times\F_2^t}1_{F^{-1}(u)}(x)(-1)^{{g}(x)\oplus h_u(y)\oplus a'\cdot x \oplus a^{''}\cdot y}\\
%&=&\sum_{u\in \{F(z) \,:\, z \in \F_2^s\}}\sum_{(x,y)\in\F_2^s\times\F_2^t}(-1)^{{g}(x)\oplus h_u(y)\oplus a'\cdot x \oplus a^{''}\cdot y}\prod_{i=1}^{k}\left(\frac{1+(-1)^{f_i(x)\oplus u_i}}{2}\right)\\
&=&\sum_{u\in  \F_2^k}\sum_{(x,y)\in\F_2^s\times\F_2^t}\left[(-1)^{{g}(x)\oplus h_u(y)\oplus a'\cdot x \oplus a^{''}\cdot y}\prod_{i=1}^{k}\left(\frac{1+(-1)^{f_i(x)\oplus u_i}}{2}\right)\right]\\
%&=&\frac{1}{2^k}\sum_{u\in  \F_2^k}\sum_{(x,y)\in\F_2^s\times\F_2^t}\left[(-1)^{{g}(x)\oplus h_u(y)\oplus a'\cdot x \oplus a^{''}\cdot y}\prod_{i=1}^{k}\left(1+(-1)^{f_i(x)\oplus u_i}\right)\right]\\
%&=&\frac{1}{2^k}\sum_{u\in  \F_2^k}\sum_{(x,y)\in\F_2^s\times\F_2^t}\left[(-1)^{{g}(x)\oplus h_u(y)\oplus a'\cdot x \oplus a^{''}\cdot y}\left(1+\sum_{\emptyset\neq I\subseteq [k]}(-1)^{\oplus_{i\in I}(f_i(x)\oplus u_i)}\right)\right]\\
%&=&\frac{1}{2^k}\sum_{u\in  \F_2^k}\Bigg[W_{{g}}(a')W_{h_u}(a^{''})+\sum_{\emptyset\neq I\subseteq [k]}(-1)^{\oplus_{i\in I}u_i}W_{ {g}\oplus f_I }(a')W_{h_u}(a^{''})\Bigg].
&=&\frac{1}{2^k}\sum_{u\in  \F_2^k}\sum_{(x,y)\in\F_2^s\times\F_2^t}\left[(-1)^{{g}(x)\oplus h_u(y)\oplus a'\cdot x \oplus a^{''}\cdot y}\sum_{v\in\F_2^k}(-1)^{v\cdot F(x)\oplus v\cdot u}\right]\\
&=&\frac{1}{2^k}\sum_{u\in  \F_2^k}\sum_{v\in\F_2^k}(-1)^{v\cdot u} W_{{g}\oplus v\cdot F}(a')W_{h_u}(a^{''}).
\end{eqnarray*}
{\color{black}Finally, for any  $a=(a', a^{\prime\prime})\in\F_2^s\times \F_2^t$ with $a'=(a_1,\cdots,a_s)$ and $a^{\prime\prime}=(a^{\prime\prime}_1,\cdots, a^{\prime\prime}_t)$,
it follows from \eqref{E:Walshfi} that
 $$
 W_f(a)=\frac{1}{2^k}\sum_{v\in\F_2^k}W_{{g}\oplus v\cdot F}(a^\prime)\sum_{u\in  \F_2^k}(-1)^{v\cdot u} W_{h_u}(a^{\prime\prime}).$$
 {\color{black} This proves (\ref{E:fConsk=2Walsh}) since,} for any $(a^{\prime\prime},v)\in\F_2^t\times \F_2^k$, we have:
 \begin{eqnarray*}
 W_h(a^{\prime\prime},v)=&&\sum_{(y,u)\in \F_2^t\times \F_2^k}(-1)^{h(y,u)\oplus (a^{\prime\prime},v)\cdot (y,u)}\\
 &=&\sum_{u\in  \F_2^k}(-1)^{v\cdot u} W_{h_u}(a^{\prime\prime}).
 \end{eqnarray*}}
\end{proof}

{\color{black}\begin{remark} We have seen in Remark \ref{r1} that we can without loss of generality change the co-domain of $F$ (concretely, reduce the value of $k$). Note that this is true from the viewpoint of the construction of $f$ and of Relation (\ref{E:WalshPreimage}), but not really from that of Relation (\ref{E:Walshfi}). {\color{black}However,  for every $u$ outside the image set of $F$ and every $a''$, the value chosen for $W_{h_u}(a^{''})$ plays in fact no role}.
\end{remark}}

\subsection{The expression of $f$ and Relations (\ref{E:WalshPreimage}) and (\ref{E:Walshfi}) when $F$ has a small image set}
\subsubsection{$Im (F)$ of sizes 1, 2 and 3}Let us first revisit the cases of direct and indirect sums {\color{black}and study further whether Function (\ref{gen1}) recalled above can correspond to Construction \ref{D:GIndirectSum} with $|Im (F)|=3$. We shall see how Relation (\ref{E:WalshPreimage})  leads in all the first two cases to the values of $W_f(a)$ that we recalled in the introduction, and compare what it gives when $|Im (F)|=3$ with the formula we have recalled for the Walsh transform of Function (\ref{gen1})}. We shall also do the same with Relation (\ref{E:Walshfi}), since it is interesting for the study of future cases, to detail how each of these two relations fits with the known formulas. {\color{black}This will lead us (in Lemma \ref{l1}) to an observation that we consider interesting for its own sake and that will simplify the formulas that we obtain with Relations (\ref{E:WalshPreimage}) and (\ref{E:Walshfi}) and make them match.}\\
$\bullet$ {\bf Size 1:}  If $Im(F)=\{u_0\}$, the Boolean function $f$ defined in Construction~\ref{D:GIndirectSum} belongs to
the direct sum construction. Note that $1_{F^{-1}(u)}=\F_2^s$ in this case.
Relation \eqref{E:WalshPreimage} writes:
\begin{eqnarray*}
% \nonumber to remove numbering (before each equation)
W_f(a)&=&\sum_{u\in \F_2^s}\sum_{(x,y)\in\F_2^s\times\F_2^t}1_{F^{-1}(u)}(x)(-1)^{{g}(x)\oplus h_u(y)\oplus a'\cdot x \oplus a^{''}\cdot y}\\
&=&\sum_{(x,y)\in\F_2^s\times\F_2^t}(-1)^{{g}(x)\oplus h_{u_0}(y)\oplus a'\cdot x \oplus a^{''}\cdot y}\\
&=&W_{{g}}(a')W_{h_{u_0}}(a^{''}),
\end{eqnarray*}
where $a=(a', a^{''})\in\F_2^s\times \F_2^t$ with $a'=(a_1,\cdots,a_s)$ and $a^{''}=(a^{''}_1,\cdots, a^{''}_t)$.  This is the relation recalled above for the direct sum.\\
Relation (\ref{E:Walshfi}) writes (using that $W_{g\oplus 1}=-W_g$) \begin{eqnarray*}W_f(a)&=&\frac{1}{2^k}\sum_{u\in  \F_2^k}\sum_{v\in\F_2^k}(-1)^{v\cdot u} W_{{g}\oplus v\cdot u_0}(a')W_{h_u}(a^{''})\\&=&\frac{1}{2^k}\sum_{u\in  \F_2^k}\sum_{v\in\F_2^k}(-1)^{v\cdot (u+u_0)} W_{{g}}(a')W_{h_u}(a^{''})\\&=&W_{{g}}(a')W_{h_{u_0}}(a^{''}).\end{eqnarray*}\\
$\bullet$ {\bf Size 2:} If $Im(F)=\{u_0,u_1\}$ with $u_0\neq u_1$,  then as we also saw already,  Construction~\ref{D:GIndirectSum}
becomes the indirect sum. {\color{black}As we observed in Remark \ref{r1}, we could\footnote{This is also true for the previous case.}  take $F$ valued in $\F_2$, and so $k=1$ (fixing for instance $u_0=0$ and $u_1=1$). We prefer however not to make such a simplification, for two reasons: this will better prepare us for the study of the case where $Im (F)$ has size 3 below (that we shall also make without restricting the co-domain of $F$), and this keeps full freedom for the choice of function $F$.} We have: $f(x,y)=1_{F^{-1}(u_0)}(x)\big({g}(x)\oplus h_{u_0}(y)\big)\oplus 1_{F^{-1}(u_1)}(x)\big({g}(x)\oplus h_{u_1}(y)\big)$, and for any $a=(a', a^{''})\in\F_2^s\times \F_2^t$, where  $a'=(a_1,\cdots,a_s)$ and $a^{''}=(a^{''}_1,\cdots, a^{''}_t)$,
Relation (\ref{E:WalshPreimage}) gives:
\begin{eqnarray*}
% \nonumber to remove numbering (before each equation)
W_f(a)&=&\sum_{u\in \F_2^s}\sum_{(x,y)\in\F_2^s\times\F_2^t}1_{F^{-1}(u)}(x)(-1)^{{g}(x)\oplus h_u(y)\oplus a'\cdot x \oplus a^{''}\cdot y}\\
&=&\sum_{(x,y)\in\F_2^s\times\F_2^t}1_{F^{-1}(u_0)}(x)(-1)^{{g}(x)\oplus h_{u_0}(y)\oplus a'\cdot x \oplus a^{''}\cdot y}\\
&&+\sum_{(x,y)\in\F_2^s\times\F_2^t}1_{F^{-1}(u_1)}(x)(-1)^{{g}(x)\oplus h_{u_1}(y)\oplus a'\cdot x \oplus a^{''}\cdot y}.
\end{eqnarray*}
Still denoting by $g\oplus {g}'\in\mathcal{B}_s$ the Boolean function $1_{F^{-1}(u_1)}$ whose support equals $F^{-1}(u_1)$, we have that the co-support of $g\oplus {g}'\in\mathcal{B}_s$ is $F^{-1}(u_0)$ (and we have seen in Remark \ref{R:GIndirectSum}  that $f(x,y)=
{g}(x)\oplus h_{u_0}(y)\oplus ({g}\oplus {g}')(x)(h_{u_0}\oplus h_{u_1})(y)$) and using that $1\oplus (g\oplus {g}')(x)=\frac 12{\Big (}1+(-1)^{(g\oplus {g}')(x)}{\Big )}$ and $(g\oplus {g}')(x)=\frac 12{\Big (}1-(-1)^{(g\oplus {g}')(x)}{\Big )}$, we have from the latter expression we obtained for $W_f(a)$ above:
\begin{eqnarray}\label{Eq:IndirectsumWalsh}
W_f(a', a^{''})=\frac 12 W_{{g}}(a')\big[W_{h_{u_0}}(a^{''})+W_{h_{u_1}}(a^{''})\big]+\frac 12 W_{{g}'}(a')\big[W_{h_{u_0}}(a^{''})-W_{h_{u_1}}(a^{''})\big],
\end{eqnarray}and this is the relation that we recalled in the introduction. Let  us now consider Relation (\ref{E:Walshfi}).
For every $u\in  \F_2^k$, we have  (using again that $W_{g\oplus 1}=-W_g$):
$$\sum_{v\in\F_2^k}(-1)^{v\cdot u} W_{{g}\oplus v\cdot F}(a')=$$$$\sum_{v\in\F_2^k}(-1)^{v\cdot u} W_{{g}\oplus (v\cdot u_0)(g\oplus {g}'\oplus 1)\oplus (v\cdot u_1)(g\oplus {g}')}(a')=$$$$\sum_{v\in\F_2^k}(-1)^{v\cdot (u\oplus u_0)} W_{{g}\oplus (v\cdot (u_0\oplus u_1))(g\oplus {g}')}(a')=$$$$\sum_{v\in \langle u_0\oplus u_1\rangle^\perp}(-1)^{v\cdot (u\oplus u_0)} W_{g}(a')+\sum_{v\in \F_2^k\setminus \langle u_0\oplus u_1\rangle^\perp}(-1)^{v\cdot (u\oplus u_0)} W_{g'}(a')=$$$$\left\{\begin{array}{l}2^{k-1}W_g(a')\mbox{ if }u\in \{u_0,u_1\}\\0\mbox{ otherwise}\end{array}\right.+\left\{\begin{array}{l}2^{k-1}W_g(a')\mbox{ if }u=u_0\\-2^{k-1}W_g(a')\mbox{ if }u=u_1\\0\mbox{ otherwise}\end{array}\right..$$
Relation (\ref{E:Walshfi}) gives then (\ref{Eq:IndirectsumWalsh}).

{\color{black}$\bullet$ {\bf Size 3:} If $Im(F)=\{u_0,u_1,u_2\}$ with $u_0, u_1,u_2$ distinct,  then we have seen in Remark \ref{R:GIndirectSum} that Construction~\ref{D:GIndirectSum}
gives: $f(x,y)=1_{F^{-1}(u_0)}(x)\big({g}(x)\oplus h_{u_0}(y)\big)\oplus 1_{F^{-1}(u_1)}(x)\big({g}(x)\oplus h_{u_1}(y)\big)\oplus 1_{F^{-1}(u_2)}(x)\big({g}(x)\oplus h_{u_2}(y)\big)$, and for any $a=(a', a^{''})\in\F_2^s\times \F_2^t$, where  $a'=(a_1,\cdots,a_s)$ and $a^{''}=(a^{''}_1,\cdots, a^{''}_t)$,
Relation (\ref{E:WalshPreimage}) gives:
\begin{eqnarray}
% \nonumber to remove numbering (before each equation)
\nonumber W_f(a)&=&\sum_{u\in \F_2^s}\sum_{(x,y)\in\F_2^s\times\F_2^t}1_{F^{-1}(u)}(x)(-1)^{{g}(x)\oplus h_u(y)\oplus a'\cdot x \oplus a^{''}\cdot y}\\\nonumber
&=&\sum_{(x,y)\in\F_2^s\times\F_2^t}1_{F^{-1}(u_0)}(x)(-1)^{{g}(x)\oplus h_{u_0}(y)\oplus a'\cdot x \oplus a^{''}\cdot y}\\\nonumber
&&+\sum_{(x,y)\in\F_2^s\times\F_2^t}1_{F^{-1}(u_1)}(x)(-1)^{{g}(x)\oplus h_{u_1}(y)\oplus a'\cdot x \oplus a^{''}\cdot y}\\\label{eqqq}
&&+\sum_{(x,y)\in\F_2^s\times\F_2^t}1_{F^{-1}(u_2)}(x)(-1)^{{g}(x)\oplus h_{u_2}(y)\oplus a'\cdot x \oplus a^{''}\cdot y}.
\end{eqnarray}
Denoting as in Remark \ref{R:GIndirectSum} by $g\oplus {g}'\in\mathcal{B}_s$ the Boolean function $1_{F^{-1}(u_1)}$ and by $g\oplus {g}^{''}$ the Boolean function $1_{F^{-1}(u_2)}$ (which gives $f(x,y)={g}(x)\oplus h_{u_0}(y)\oplus (g(x)\oplus g'(x))(h_{u_1}(y)\oplus h_{u_2}(y))\oplus(g'(x)\oplus g^{''}(x))(h_{u_2}(y)\oplus h_{u_0}(y))$) and using that $1_{F^{-1}(u_0)}=1\oplus (g'\oplus {g}^{''})(x)=\frac 12{\Big (}1+(-1)^{(g'\oplus {g}^{''})(x)}{\Big )}$ and $1_{F^{-1}(u_1)}=(g\oplus {g}')(x)=\frac 12{\Big (}1-(-1)^{(g\oplus {g}')(x)}{\Big )}$ and $1_{F^{-1}(u_2)}=(g\oplus {g}^{''})(x)=\frac 12{\Big (}1-(-1)^{(g\oplus {g}^{''})(x)}{\Big )}$, we deduce from Relation (\ref{eqqq}):
\begin{eqnarray}
\nonumber W_f(a)&=&\frac 12\sum_{(x,y)\in\F_2^s\times\F_2^t}{\big (}(-1)^{{g}(x)\oplus h_{u_0}(y)\oplus a'\cdot x \oplus a^{''}\cdot y}+(-1)^{{g}(x)\oplus g' (x)\oplus g^{''}(x)\oplus h_{u_0}(y)\oplus a'\cdot x \oplus a^{''}\cdot y}{\Big )}\\
\nonumber &&+\frac 12\sum_{(x,y)\in\F_2^s\times\F_2^t}{\big (}(-1)^{{g}(x)\oplus h_{u_1}(y)\oplus a'\cdot x \oplus a^{''}\cdot y}-(-1)^{g' (x)\oplus h_{u_1}(y)\oplus a'\cdot x \oplus a^{''}\cdot y}{\Big )}\\
\nonumber &&+\frac 12\sum_{(x,y)\in\F_2^s\times\F_2^t}{\big (}(-1)^{{g}(x)\oplus h_{u_2}(y)\oplus a'\cdot x \oplus a^{''}\cdot y}-(-1)^{ g^{''}(x)\oplus h_{u_2}(y)\oplus a'\cdot x \oplus a^{''}\cdot y}{\Big )}\\\nonumber &=&\frac 12 W_{h_{u_0}}(a^{''}){\big (}W_g(a')+W_{g\oplus g'\oplus g^{''}}(a'){\Big )}+\frac 12W_{h_{u_1}}(a^{''}){\big (}W_g(a')-W_{g'}(a'){\Big )}\\
\label{walshgen1}&&+\frac 12W_{h_{u_2}}(a^{''}){\big (}W_g(a')-W_{g^{''}}(a'){\Big )}.
\end{eqnarray}
This is different from what we recalled in the introduction about Construction \ref{gen1}'s Walsh transform and this confirms that we have here a different construction. Note that the expression of the  Walsh transform is simpler here, but we are in a particular case since we have $(g\oplus g')(g\oplus g'')=0$ while there is no condition on $g,g',g''$ in (\ref{gen1}).\\
Let us now study what gives Relation (\ref{E:Walshfi}). For every $u\in  \F_2^k$, we have  (using again that $W_{g\oplus 1}=-W_g$):
$$\sum_{v\in\F_2^k}(-1)^{v\cdot u} W_{{g}\oplus v\cdot F}(a')=$$$$\sum_{v\in\F_2^k}(-1)^{v\cdot u} W_{{g}\oplus (v\cdot u_0)(1\oplus g'\oplus g^{''})\oplus (v\cdot u_1)(g\oplus g'))\oplus (v\cdot u_2)(g\oplus g^{''})}(a')=$$$$\sum_{v\in\F_2^k}(-1)^{v\cdot (u\oplus u_0)} W_{{g}\oplus (v\cdot (u_0\oplus u_1))(g\oplus {g}')\oplus (v\cdot (u_0\oplus u_2))(g\oplus {g}'')}(a')=$$$$\sum_{v\in \langle u_0\oplus u_1,u_0\oplus u_2\rangle^\perp}(-1)^{v\cdot (u\oplus u_0)} W_{g}(a')+\sum_{v\in \langle u_0\oplus u_1\rangle^\perp\setminus \langle u_0\oplus u_2\rangle^\perp}(-1)^{v\cdot (u\oplus u_0)} W_{g''}(a')+$$$$\sum_{v\in \langle u_0\oplus u_2\rangle^\perp\setminus \langle u_0\oplus u_1\rangle^\perp}(-1)^{v\cdot (u\oplus u_0)} W_{g'}(a')+\sum_{v\in \F_2^k\setminus (\langle u_0\oplus u_1\rangle^\perp \cup \langle u_0\oplus u_2\rangle^\perp)}(-1)^{v\cdot (u\oplus u_0)} W_{g\oplus g'\oplus g''}(a')=$$$$\left\{\begin{array}{l}2^{k-2}W_g(a')\mbox{ if }u\in \{u_0,u_1,u_2,u_0\oplus u_1\oplus u_2\}\\0\mbox{ otherwise}\end{array}\right.+\left\{\begin{array}{l}2^{k-2}W_{g''}(a')\mbox{ if }u\in \{u_0,u_1\}\\-2^{k-2}W_{g''}(a')\mbox{ if }u\in \{u_2,u_0\oplus u_1\oplus u_2\}\\0\mbox{ otherwise}\end{array}\right.$$$$+\left\{\begin{array}{l}2^{k-2}W_{g'}(a')\mbox{ if }u\in \{u_0,u_2\}\\-2^{k-2}W_{g'}(a')\mbox{ if }u\in \{u_1,u_0\oplus u_1\oplus u_2\}\\0\mbox{ otherwise}\end{array}\right.+\left\{\begin{array}{l}2^{k-2}W_{g\oplus g'\oplus g''}(a')\mbox{ if }u\in \{u_0,u_0\oplus u_1\oplus u_2\}\\-2^{k-2}W_{g\oplus g'\oplus g''}(a')\mbox{ if }u\in \{u_1,u_2\}\\0\mbox{ otherwise}\end{array}\right..$$
Relation (\ref{E:Walshfi}) gives then $$W_f(a)=\frac 14W_g(a'){\Big (}W_{h_{u_0}}(a^{''})+W_{h_{u_1}}(a^{''})+W_{h_{u_2}}(a^{''})+W_{h_{u_0\oplus u_1\oplus u_2}}(a^{''}){\Big )}+$$$$\frac 14W_{g''}(a'){\Big (}W_{h_{u_0}}(a^{''})+W_{h_{u_1}}(a^{''})-W_{h_{u_2}}(a^{''})-W_{h_{u_0\oplus u_1\oplus u_2}}(a^{''}){\Big )}+$$$$\frac 14W_{g'}(a'){\Big (}W_{h_{u_0}}(a^{''})-W_{h_{u_1}}(a^{''})+W_{h_{u_2}}(a^{''})-W_{h_{u_0\oplus u_1\oplus u_2}}(a^{''}){\Big )}+$$\begin{equation}\label{exp}\frac 14W_{g\oplus g'\oplus g''}(a'){\Big (}W_{h_{u_0}}(a^{''})-W_{h_{u_1}}(a^{''})-W_{h_{u_2}}(a^{''})+W_{h_{u_0\oplus u_1\oplus u_2}}(a^{''}){\Big )},\end{equation} which is different (as an expression) from (\ref{walshgen1}) but necessarily has the same value. Note in particular that, if we develop the expression, the term $W_{h_{u_0\oplus u_1\oplus u_2}}(a^{''})$ has for factor $W_g(a')-W_{g''}(a')-W_{g'}(a')+W_{g\oplus g'\oplus g''}(a')$. We can check that this factor equals 0 (as it should, because $u_0\oplus u_1\oplus u_2$ being different from $u_0$, $u_1$ and $u_2$, the value of $W_{h_{u_0\oplus u_1\oplus u_2}}(a^{''})$ is supposed to play no role in the value of $W_f$) because the two functions $g\oplus g'$ and $g\oplus g''$ have disjoint supports. We check this in the next lemma, which will alow to moreover simplify Relation (\ref{exp}). Note that Expression (\ref{exp}) is similar but different from (\ref{walshant}).
\begin{lemma}\label{l1}Let $g,g',g''$ be any Boolean functions such that $(g\oplus g')(g\oplus g'')=0$. We have then that the integer-valued functions $g-g'-g''+(g\oplus g'\oplus g'')$ and $W_g-W_{g'}-W_{g''}+W_{g\oplus g'\oplus g''}$ are identically zero.
\end{lemma}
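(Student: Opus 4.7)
The plan is to prove the integer-valued identity first by pointwise case analysis, and then to deduce the Walsh identity from it via the elementary conversion $(-1)^b = 1 - 2b$, valid for $b \in \{0,1\}$ viewed as an integer.

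For the integer identity, the hypothesis $(g \oplus g')(g \oplus g'') = 0$ says that at every $x$ at least one of the equalities $g(x) = g'(x)$, $g(x) = g''(x)$ holds, so the forbidden configuration is $g(x) \neq g'(x)$ and $g(x) \neq g''(x)$. Consequently, only three cases arise: (i) $g(x) = g'(x) = g''(x)$; (ii) $g(x) = g'(x)$ and $g''(x) = 1 \oplus g(x)$; and (iii) $g(x) = g''(x)$ and $g'(x) = 1 \oplus g(x)$. In each case I would compute $g(x) \oplus g'(x) \oplus g''(x)$ directly and check that the integer combination $g(x) - g'(x) - g''(x) + (g \oplus g' \oplus g'')(x)$ equals $0$. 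This amounts to examining six Boolean triples and is purely routine; in case (i) the four signed terms are $c,-c,-c,c$; in cases (ii) and (iii) they are $c,-c,-(1\oplus c),(1\oplus c)$ or $c,-(1\oplus c),-c,(1\oplus c)$ respectively, and each sum collapses to $0$.

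For the Walsh identity, I would apply $(-1)^b = 1-2b$ separately to each of $g, g', g'', g \oplus g' \oplus g''$. Because the four signs are $+,-,-,+$, the constant terms cancel and one obtains
\[(-1)^{g(x)} - (-1)^{g'(x)} - (-1)^{g''(x)} + (-1)^{(g \oplus g' \oplus g'')(x)} = -2\bigl[g(x) - g'(x) - g''(x) + (g \oplus g' \oplus g'')(x)\bigr],\]
which vanishes pointwise by the first part. Multiplying this identity by $(-1)^{a \cdot x}$ and summing over $x \in \F_2^s$ then distributes the four signs over the Walsh sum and yields $W_g(a) - W_{g'}(a) - W_{g''}(a) + W_{g \oplus g' \oplus g''}(a) = 0$ for every $a$, as required.

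I do not foresee a genuine obstacle: the whole argument is a pointwise verification, and the two points needing care are simply (a) confirming that the hypothesis rules out the fourth Boolean configuration so that the three-case split is exhaustive, and (b) keeping signs consistent when passing from the integer identity to the signed identity used for the Walsh transform.
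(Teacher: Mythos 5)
Your proposal is correct and follows essentially the same route as the paper: first establish the pointwise integer identity $g-g'-g''+(g\oplus g'\oplus g'')=0$, then transfer it to the Walsh transforms by linearity of the (signed) sum over $x$. The only difference is cosmetic --- you verify the integer identity by an exhaustive three-case analysis, whereas the paper derives it algebraically via $x\oplus y=x+y-2xy$ and the consequence $g'g''=gg'+gg''-g$ of the hypothesis; both verifications are routine and equally valid.
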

\begin{proof}We have $g\oplus g'=g+g'-2gg'$ and $g\oplus g''=g+g''-2gg''$. Then $0=(g+g'-2gg')(g+g''-2gg'')=g-gg'-gg''+g'g''$; hence, $g'g''=gg'+gg''-g$. We have $g\oplus g'\oplus g''=g+g'+g''-2gg'-2gg''-2g'g''+4gg'g''=g+g'+g''-2gg'-2gg''-2g'g''+4(gg'+gg''-g)$ and then $g-g'-g''+g\oplus g'\oplus g''=2gg'+2gg''-2g'g''-2g=0$. For every $s$-variable Boolean function $h$, we have $W_h(a')=2^s\delta_0(a')-2\widehat h(a')$ where $\widehat h(a')=\sum_{x\in \F_2^s}h(x)(-1)^{a'\cdot x}$ and we easily deduce from the relation $g-g'-g''+g\oplus g'\oplus g''=0$ and from the linearity of the Fourier transform with respect to the addition in $\mathbb Z$ that $W_g(a')-W_{g''}(a')-W_{g'}(a')+W_{g\oplus g'\oplus g''}(a')=0$.\end{proof}
\begin{remark}Lemma \ref{l1} can be interpreted in the following way (after denoting $f=g\oplus g'$ and $h=g\oplus g''$) which may present some interest in the framework of algebraic attacks: let $f$ be any Boolean function and $h$ be any annihilator of $f$ (i.e.satisfying $fg= 0$). Let $g$ be any Boolean function, then we have $W_g-W_{g\oplus f}-W_{g\oplus h}+W_{g\oplus f\oplus h}= 0$.
\end{remark}
Using now Lemma \ref{l1}, we obtain from Relation (\ref{exp}): $W_f(a)=\frac 14W_g(a'){\Big (}W_{h_{u_0}}(a^{''})+W_{h_{u_1}}(a^{''})+W_{h_{u_2}}(a^{''})+W_{h_{u_0\oplus u_1\oplus u_2}}(a^{''}){\Big )}+\frac 14W_{g''}(a'){\Big (}W_{h_{u_0}}(a^{''})+W_{h_{u_1}}(a^{''})-W_{h_{u_2}}(a^{''})-W_{h_{u_0\oplus u_1\oplus u_2}}(a^{''}){\Big )}+\frac 14W_{g'}(a'){\Big (}W_{h_{u_0}}(a^{''})-W_{h_{u_1}}(a^{''})+W_{h_{u_2}}(a^{''})-W_{h_{u_0\oplus u_1\oplus u_2}}(a^{''}){\Big )}+\frac 14{\big (}W_{g'}(a')+W_{g''}(a')-W_{g}(a'){\big )}{\Big (}W_{h_{u_0}}(a^{''})-W_{h_{u_1}}(a^{''})-W_{h_{u_2}}(a^{''})+W_{h_{u_0\oplus u_1\oplus u_2}}(a^{''}){\Big )}=\frac 14W_g(a'){\Big (}2W_{h_{u_1}}(a^{''})+2W_{h_{u_2}}(a^{''}){\Big )}+\frac 14W_{g''}(a'){\Big (}2W_{h_{u_0}}(a^{''})-2W_{h_{u_2}}(a^{''}){\Big )}+\frac 14W_{g'}(a'){\Big (}2W_{h_{u_0}}(a^{''})-2W_{h_{u_1}}(a^{''}){\Big )}$, and we obtain from Relation (\ref{walshgen1}):$$W_f(a)=\frac 12 W_{h_{u_0}}(a^{''}){\big (}W_{g'}(a')+W_{g^{''}}(a'){\big )}+$$ \begin{equation}\label{finaleq}\frac 12W_{h_{u_1}}(a^{''}){\big (}W_g(a')-W_{g'}(a'){\big )}+\frac 12W_{h_{u_2}}(a^{''}){\big (}W_g(a')-W_{g^{''}}(a'){\big )},\end{equation}which is the same.

\begin{remark}In Relation (\ref{walshant}), to make the coefficient of $W_{h\oplus h'\oplus h^{''}}(a^{''})$ equal to 0, we are led to take $g,g',g''$ such that $W_{g\oplus g'\oplus g^{''}}(a')+W_{g^{''}}(a')=W_{g}(a')\!+\!W_{g'}(a')\!$. For being able to apply Lemma \ref{l1}, we exchange the roles of $g$ and $g''$ and we get then $W_f(a',a^{''})= \frac{1}{2}W_{h}(a^{''})\left[ W_{g''}(a')+W_{g'}(a')\right]+\frac{1}{2}W_{h'}(a^{''})\left[ W_{g''}(a')-W_{g}(a')\right]+\frac{1}{2}W_{h^{''}}(a^{''})\left[ W_{g}(a')-W_{g'}(a')\right]$, similar to (\ref{finaleq}) but not the same.
\end{remark}}
\subsubsection{$Im (F)$ of size 4}{\color{black}In this case, we apply what we observed in Remark \ref{r1} and take then $k=2$, since continuing in the most general setting as we did with the previous cases would lead to complex relations. Let us recall Theorem \ref{T:Walshf} in the particular case of $k=2$.}
Given  $F=(f_1,f_2)$ a $(s,2)$-vectorial Boolean function with image set $\{(0,0),(0,1),(1,0),(1,1)\}$ and $g$ an $s$-variable Boolean function,  and $h_{(0,0)}, h_{(0,1)}, h_{(1,0)}, h_{(1,1)}$ four Boolean functions in $t$ variables,
we define the $n=s+t$-variable Boolean function $f(x,y)={g}(x)\oplus h_{F(x)}(y)$,
where $x\in\F_2^s, y\in\F_2^t$. Then we have
\begin{eqnarray}\label{E:fConsk=2'}
f(x,y)&=&g(x)\oplus f_1(x)f_2(x)[h_{(0,0)}(y)\oplus h_{(0,1)}(y)\oplus h_{(1,0)}(y)\oplus h_{(1,1)}(y)]\nonumber\\
&&\oplus f_1(x)[h_{(0,0)}(y)\oplus h_{(1,0)}(y)]\oplus f_2(x)[h_{(0,0)}(y)\oplus h_{(0,1)}(y)]\oplus h_{(0,0)}(y).
\end{eqnarray}
Defining $h(y,u)=h_u(y)$, that is, $h=h_{(0,0)}|| h_{(0,1)}||h_{(1,0)}||h_{(1,1)}$, we have:
 \begin{eqnarray}\label{E:fConsk=2Walsh'}
 W_f(a)&=&\frac{1}{4}\Big[W_h(a^{''},0,0)W_g(a^{'})+W_h(a^{''},0,1)W_{g\oplus f_2}(a^{'})\nonumber\\
 &&+W_h(a^{''},1,0)W_{g\oplus f_1}(a^{'})+W_h(a^{''},1,1)W_{g\oplus f_1\oplus f_2}(a^{'})\Big].
\end{eqnarray}

%\begin{proof} {\color{black}Relation (\ref{E:fConsk=2'}) is directly deduced from the definition of Construction \ref{D:GIndirectSum}, translated into the expression $f(x,y)={g}(x)\oplus \bigoplus_{\epsilon,\eta\in \F_2}(f_1(x)\oplus \epsilon\oplus 1)(f_2(x)\oplus \eta\oplus 1)h_{\epsilon,\eta}(y)$.} \\Furthermore, for any  $a=(a', a^{\prime\prime})\in\F_2^s\times \F_2^t$ with $a'=(a_1,\cdots,a_s)$ and $a^{\prime\prime}=(a^{\prime\prime}_1,\cdots, a^{\prime\prime}_t)$,it follows from \eqref{E:Walshfi} that $$ W_f(a)=\frac{1}{4}\sum_{v\in\F_2^2}W_{{g}\oplus v\cdot F}(a^\prime)\sum_{u\in  \F_2^2}(-1)^{v\cdot u} W_{h_u}(a^{\prime\prime}).$$ {\color{black} This proves (\ref{E:fConsk=2Walsh'}) since,} for any $(a^{\prime\prime},v)\in\F_2^t\times \F_2^2$, we have: \begin{eqnarray*} W_h(a^{\prime\prime},v)=&&\sum_{(y,u)\in \F_2^t\times \F_2^2}(-1)^{h(y,u)\oplus (a^{\prime\prime},v)\cdot (y,u)}\\ &=&\sum_{u\in  \F_2^2}(-1)^{v\cdot u} W_{h_u}(a^{\prime\prime}). \end{eqnarray*}\end{proof}
{\color{black}\paragraph{Function (\ref{gen1}) as a particular case:}
Let $g,g',g^{''}$ be three Boolean functions in $s$ variables such that $f_1(x)=(g\oplus g')(x)$ and $f_2(x)=(g\oplus g^{''})(x)$. Assuming that: $$h_{(0,1)}=h_{(0,0)}\oplus h_{(1,0)}\oplus  h_{(1,1)},$$ there exist three Boolean functions $h,h',h^{''}$ in $t$ variables such that} $h_{(0,0)}(y)=h(y)$, $h_{(0,1)}(y)=(h\oplus h'\oplus h^{''})(y)$,
$h_{(1,0)}(y)=h^{''}(y)$,  $h_{(1,1)}(y)=h'(y)$.
Then by~\eqref{E:fConsk=2'} we have
\begin{eqnarray*}
f(x,y)&=&g(x)\oplus h(y)\oplus (g\oplus g')(x)(h\oplus h^{''})(y)\oplus (g\oplus g^{''})(x)(h'\oplus h^{''})(y)\\
&=&g(x)\oplus h(y)\oplus (g \oplus g^{'})(x) (h \oplus h^{'})(y) \oplus (g^{'} \oplus g'')(x)(h^{'} \oplus h'')(y)
\end{eqnarray*}
which is equal to Function~\eqref{gen1} and the Walsh spectrum given by~\eqref{E:fConsk=2Walsh'} is the same as~\eqref{walshant}.
{\color{black}Indeed, ~\eqref{walshant} can be rewrote as
\begin{eqnarray}\label{walshantrewrite}
W_f(a',a^{''})&=& \frac{1}{4}W_{g}(a')\left[W_h(a^{''})+W_{h\oplus h'\oplus h^{''}}(a^{''})+W_{h^{''}}(a^{''})+W_{h'}(a^{''})\right]\nonumber\\
&&+ \frac{1}{4}W_{g^{''}}(a')\left[W_h(a^{''})-W_{h\oplus h'\oplus h^{''}}(a^{''})+W_{h^{''}}(a^{''})-W_{h'}(a^{''})\right]\nonumber\\
&&+ \frac{1}{4}W_{g'}(a')\left[W_h(a^{''})+W_{h\oplus h'\oplus h^{''}}(a^{''})-W_{h^{''}}(a^{''})-W_{h'}(a^{''})\right]\nonumber\\
&&+ \frac{1}{4}W_{g\oplus g'\oplus g^{''}}(a')\left[W_h(a^{''})-W_{h\oplus h'\oplus h^{''}}(a^{''})-W_{h^{''}}(a^{''})+W_{h'}(a^{''})\right]\nonumber\\
&=&\frac{1}{4}\Big[W_{\tilde{h}}(a^{''},0,0) W_{g}(a')+W_{\tilde{h}}(a^{''},0,1) W_{g^{''}}(a')+W_{\tilde{h}}(a^{''},1,0) W_{g'}(a')\nonumber\\
&&+W_{\tilde{h}}(a^{''},1,1) W_{g\oplus g'\oplus g^{''}}(a')\Big],
\end{eqnarray}
where $\tilde{h}=h||h\oplus h'\oplus h^{''}||h^{''}||h'$.
Recall that $f_1(x)=(g\oplus g')(x)$, $f_2(x)=(g\oplus g^{''})(x)$, $h_{(0,0)}(y)=h(y)$, $h_{(0,1)}(y)=(h\oplus h'\oplus h^{''})(y)$,
$h_{(1,0)}(y)=h^{''}(y)$,  $h_{(1,1)}(y)=h'(y)$. \eqref{walshantrewrite} is the same as~\eqref{E:fConsk=2Walsh'} by replacing $\tilde{h}$ with $h$.
% let $h=h_{(0,0)}||h_{(0,0)}\oplus h_{(1,0)}\oplus  h_{(1,1)}||h_{(1,0)}||h_{(1,1)}$.
}

{\color{black}\subsubsection{$Im (F)$ of size  at most 8}}
We now consider the vectorial Boolean function $F(x)=(f_1(x), f_2(x), f_3(x))\in \F_2^3$ in Construction~\ref{D:GIndirectSum}.
According to Theorem \ref{T:Walshf}, we have:
\begin{eqnarray}\label{E:Im8}
f(x,y)&=& g(x)\oplus h_{(0,0,0)}(y)\oplus f_1(x)f_2(x)f_3(x)\nonumber\\
&&[(h_{(0,0,0)}\oplus h_{(0,0,1)}\oplus h_{(0,1,0)}\oplus h_{(0,1,1)}\oplus h_{(1,0,0)}\oplus h_{(1,0,1)}\oplus h_{(1,1,0)}\oplus h_{(1,1,1)})(y)]\nonumber\\
&&\oplus f_1(x)f_2(x)[(h_{(0,0,0)}\oplus h_{(0,1,0)}\oplus h_{(1,0,0)}\oplus h_{(1,1,0)})(y)]\nonumber\\
&&\oplus f_1(x)f_3(x)[(h_{(0,0,0)}\oplus h_{(1,0,0)}\oplus h_{(0,0,1)}\oplus h_{(1,0,1)})(y)]\nonumber\\
&&\oplus f_2(x)f_3(x)[(h_{(0,0,0)}\oplus h_{(0,0,1)}\oplus h_{(0,1,0)}\oplus h_{(0,1,1)})(y)]\nonumber\\
&&\oplus f_1(x)[(h_{(0,0,0)}\oplus h_{(1,0,0)})(y)]\oplus f_2(x)[(h_{(0,0,0)}\nonumber \\&&\oplus h_{(0,1,0)})(y)]\oplus f_3(x)[(h_{(0,0,0)}\oplus h_{(0,0,1)})(y)].
\end{eqnarray}
%where $V_S=\{v=(v_1,v_2,v_3)\in \F_2^3 : {\rm supp} (v)\subseteq S\}$, and for $S=\emptyset$ we define $\prod_{i\in S}f_i(x)=1$ and ${\rm supp} (v)=(0,0,0)$.
{\color{black}\paragraph{Function~\eqref{gen2} as a particular case:}
Let $g,g', g^{''}, g^{'''}$ be four $s$-variable Boolean functions such that $f_1(x)=(g\oplus g')(x), f_2(x)=(g\oplus g^{''})(x),  f_3(x)=(g\oplus g^{'''})(x))$, and assuming:
\begin{eqnarray*}&&h_{(0,0,1)}=h_{(0,0,0)}\oplus  h_{(1,0,0)}\oplus  h_{(1,0,1)},\\
&&h_{(0,1,0)}=h_{(0,0,0)}\oplus  h_{(1,1,1)}\oplus  h_{(1,0,1)},\\
&&h_{(0,1,1)}=h_{(0,0,0)}\oplus  h_{(1,1,1)}\oplus  h_{(1,0,0)},\\
&&h_{(1,1,0)}=h_{(1,1,1)}\oplus  h_{(1,0,0)}\oplus  h_{(1,0,1)},\end{eqnarray*}
let $h, h', h^{''}, h^{'''}$ be four $t$-variable Boolean functions such that:}
$h_{(0,0,0)}(y)=h(y)$,
$h_{(0,0,1)}(y)=(h\oplus  h^{''}\oplus  h^{'''})(y)$,
$h_{(0,1,0)}(y)=(h\oplus  h^{'}\oplus  h^{'''})(y)$,
$h_{(0,1,1)}(y)=(h\oplus  h^{'}\oplus  h^{''})(y)$,
$h_{(1,0,0)}(y)=h^{''}(y)$,
$h_{(1,0,1)}(y)=h^{'''}(y)$,
$h_{(1,1,0)}(y)=(h^{'}\oplus  h^{''}\oplus  h^{'''})(y)$, and
$h_{(1,1,1)}(y)=h^{'}(y)$.
Then the coefficients of $f_1(x)f_2(x)f_3(x), f_1(x)f_2(x), f_1(x)f_3(x)$, and $f_2(x)f_3(x)$ are all equal to 0 in~\eqref{E:Im8}, which becomes
\begin{eqnarray*}
&&f(x,y)\\
&=&g(x)\oplus h(y)\oplus f_1(x)[(h_{(0,0,0)}\oplus h_{(1,0,0)})(y)]\oplus f_2(x)[(h_{(0,0,0)}\oplus h_{(0,1,0)})(y)]\oplus \\&&f_3(x)[(h_{(0,0,0)}\oplus h_{(0,0,1)})(y)]\\
&=&g(x)\oplus h(y)\oplus (g\oplus g')(x)(h\oplus h^{''})(y)\oplus \\&&(g\oplus g^{''})(x)(h'\oplus h^{'''})(y)
\oplus (g\oplus g^{'''})(x)(h^{''}\oplus h^{'''})(y)\\
&=&g(x)\oplus h(y)\oplus g(x)h(y)\oplus g'(x)h(y)\oplus g'(x)h^{''}(y)\oplus g(x)h'(y)\\
&&\oplus g^{''}(x)h'(y)\oplus g^{''}(x)h^{'''}(y)\oplus g^{'''}(x)h^{''}(y)\oplus g^{'''}(x)h^{'''}(y)
\end{eqnarray*}
which is equal to Function~\eqref{gen2}. {\color{black}We can then consider Construction~\ref{D:GIndirectSum} as a proper general construction covering as particular cases most of the main known secondary constructions. }

\end{document}